\newtheorem{proposition}{Proposition}[section]
\newtheorem{theorem}{Theorem}[section]
\newtheorem{lemma}{Lemma}[section]
\def\bstctlcite{\@ifnextchar[{\@bstctlcite}{\@bstctlcite[@auxout]}}
\def\@bstctlcite[#1]#2{\@bsphack
  \@for\@citeb:=#2\do{%
    \edef\@citeb{\expandafter\@firstofone\@citeb}%
    \if@filesw\immediate\write\csname #1\endcsname{\string\citation{\@citeb}}\fi}%
  \@esphack}
  \newcommand*{\twoheadleftrightarrow}{%
    \twoheadleftarrow
    \mathrel{\mkern-15mu}%
    \twoheadrightarrow
  }
\begin{document}

\bstctlcite{BSTcontrol}

\title{Two-Hop Connectivity to the Roadside in a VANET Under the Random Connection Model
}

\author{Alexander P. Kartun-Giles, {\it{Member, IEEE}}, Konstantinos Koufos, Xiao Lu, {\it{Member, IEEE}}, and Dusit Niyato, {\it{Fellow, IEEE}}  
\thanks{A. P. Kartun-Giles is with the School of Physics and Mathematical Sciences, Nanyang Technological University, Singapore. K.~Koufos is with the School of Mathematics, University of Bristol, U.K. X.~Lu is with Ericsson, Canada, and the Lassonde School of Engineering, York University, Canada. D.~Niyato is with the School of Computer Science and Engineering,  Nanyang Technological University, Singapore. This research is supported by the Ministry of Education, Singapore, under its AcRF Tier 1 grant MOE2018-T1-001-201 RG25/18.} \protect \\ 
}

\maketitle

\begin{abstract}
In this paper, we compute the expected number of vehicles with at least one two-hop path to a fixed roadside unit (RSU) in a multi-hop, one-dimensional vehicular ad hoc network (VANET) where other cars can act as relays. The pairwise channels experience Rayleigh fading in the random connection model, and so exist, with a probability given by a function of the mutual distance between the cars, or between the cars and the RSU. We derive exact expressions for the expected number of cars with a two-hop connection to the RSU when the car density $\rho$ tends to zero and infinity, and determine its behaviour using an infinite oscillating power series in $\rho$, which is accurate for all regimes of traffic density. We also corroborate those findings with a realistic scenario, using snapshots of actual traffic data. Finally, a normal approximation is discussed for the probability mass function of the number of cars with a two-hop connection to the RSU.
\end{abstract}

\begin{IEEEkeywords}
Vehicular networks, end-to-end connectivity, random connection model, stochastic geometry, mobility traces. 
\end{IEEEkeywords}

\section{Introduction}

The requirement for wireless communication technologies that sustain reliable connectivity between vehicles in smart motorways will become essential. A key element of the road infrastructure will be the roadside unit (RSU) with mounted stationary sensors and wireless connectivity. It is expected that the RSU will receive a variety of messages from vehicles not necessarily connected to each other, fuse the combined information and broadcast it to the vehicles~\cite{ETSIcam, ETSIcpm}. In this scenario, it is important that the broadcast reaches as many vehicles as possible. A practical solution to achieve this objective is to combine both Vehicle-to-Vehicle (V2V) and Vehicle-to-Infrastructure (V2I) communication, establishing connections between the vehicles and the RSU in a multi-hop fashion.

Motivated by this emerging scenario, the objective of this article is to develop our understanding of the RSU multi-hop communication range, i.e., how many vehicles are within $k$ hops of the RSU, as a function of the traffic density $\rho$ cars per unit length of road. In this case, we adopt a \textit{protocol-and-interference-free} approach, where the vehicles are modeled as the vertices of a one-dimensional (1D) soft random geometric graph (RGG)~\cite{penrose2016}. The reason is that we aim to identify the fundamental limits of multi-hop communication in VANETs, in particular, the simplest case, two-hop connectivity, between the vehicles and the RSU. We study the number of cars within $k=2$ hops to a single, fixed RSU, denoted as $N_2$, in a random geometric vehicular network built on a quasi-1D line (i.e. a single lane highway). The main technical contributions of our work are summarized as follows. 
\begin{itemize}
\item We derive the following infinite series expression for the expected value of $N_2$ in terms of $\rho$:
\[
  \mathbb{E} [N_2] = \sqrt{2 \pi}\rho\sum_{k=1}^\infty (-1)^{k-1}\frac{\big(\rho
    \sqrt{\pi/2} \big)^k}{k!\sqrt{k}}. 
\] 
\item We study the asymptotics of the $\mathbb{E} [N_{2}]$ as the car density approaches infinity and zero. Specifically, we show that the expected number of cars within two hops to a single, fixed RSU asymptotically grows as 
\[
\mathbb{E} [N_2] \simeq 2 \rho \sqrt{\log \big(\rho \sqrt{\pi/2} \big)} \, {\text{as}} \, \rho \to \infty.
\]
This is actually a counter-intuitive result, as one might have surmised that the quantity $\mathbb{E} [N_{2}]$ would grow proportionally to $\rho^2$. 

\item  We compare these analytic predictions to snapshots of real motorway traffic, showing how variations in the underlying stationary Poisson Point Process (PPP) model of the cars does not affect the results, and observe a normal approximation property for $N_2$. 
\end{itemize}

Two recent major works deal with the random connection model in the V2V and V2I settings~\cite{ng2011,zhang2012}. Ng et al. in~\cite{ng2011} present analytic formulas for the probability of two-hop connectivity, which gives the probability that all vehicles in a VANET are within two hops, given that the vehicles connect pairwise with probability $p$. This is developed for the general case of an arbitrary number of hops to the roadside by Zhang et al. in reference~\cite{zhang2012}. The study in \cite{Mao2010} calculates the probability that a node in a 1D PPP has a two-hop path to the RSU. However, the distribution of the number of nodes with a two-hop path is not treated therein, neither asymptotic equivalents for ultra-dense and sparse networks are derived, as we will do in this paper. In references \cite{kartungiles2018b} and~\cite{privault2019} the moments of the number of $k$-hop paths to the RSU are computed, which is a different metric than that studied in the present paper, where the moments of the number of nodes in the graph with a $k$-hop path to the RSU are calculated for $k=2$. Note that a node may have several two-hop paths to the RSU, e.g., more than one relay node to the RSU in the case of two-hop paths. 

In the rest of this paper we set up our VANET system model using the random connection model in Section \ref{sec:model}. In Section \ref{sec:analysis}, we state the main results of this paper. In Section \ref{sec:traces}, we corroborate our scaling laws with experimental evidence provided from road traffic data. We conclude in Section~\ref{sec:conclusion}.
\begin{figure}[!t] 
\centering
     \centering   
     \includegraphics[width=0.45 \textwidth]{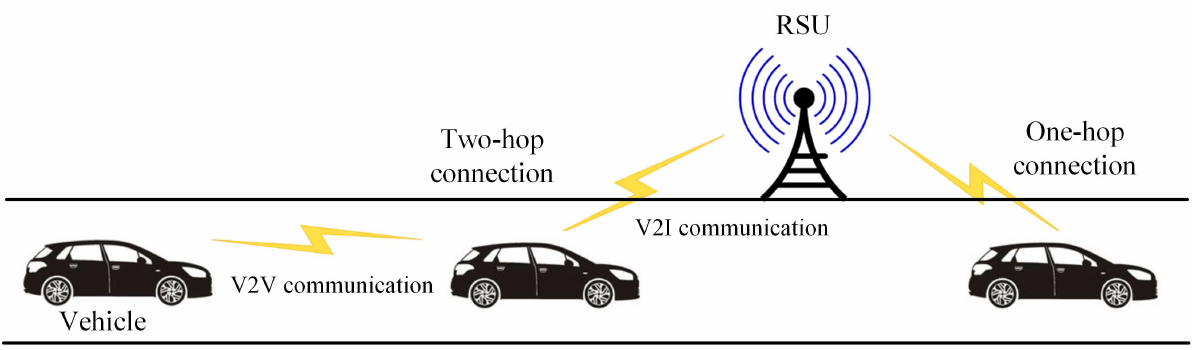}
     \centering
  \caption{An infrastructure-based VANET model with multi-hop connectivity.}
\label{fig:system}
\end{figure}

\section{System Model}
\label{sec:model}
We consider a \textit{random connection model}, which is an RGG $G_H (\mathcal{X})$ built on a stationary, homogeneous PPP $\mathcal{X}$ with flat intensity $\rho>0$ on an interval $V \subset \mathbb{R}$ centered at the origin. The nodes (or vertices) of the graph represent vehicles. In addition, we add a vertex at the location $u\in \mathbb{R}$ representing the RSU. See Fig.~\ref{fig:system}. The edges of the graph represent wireless communication links (known in physics as \textit{soft connectivity} \cite{giles2016}) with the connection function $H\!\left(r\right)$ modeling ``Rayleigh fading'' for  any  V2V  or  V2I  communication  link at distance $r$. The selection of Rayleigh fading is relevant to our system model; channel measurements have indicated that the narrowband small scale fading in V2V communication resembles Rayleigh for link distances larger than $50$ m~\cite{Cheng2007}. 

The connection function $H(r)= \mathbb{P} ( x \leftrightarrow y ) \in [0,1]$ gives the probability that two nodes $x,y\in \mathbb{R}$ at distance $r=|x-y|$ in the graph are connected by an edge, resembling the long-term proportion of time that a Rayleigh fading channel at distance separation $r$ is in coverage~\cite{sklar1997-1}. It is a stretched exponential of the link distance $r$ and has been widely used to study connectivity in soft RGGs~\cite{penrose2016,giles2016}
\begin{equation}
 \label{eq:connH}
 H\!\left(r\right)\!=\!e^{-\beta r^\eta}, \, \beta\!>\!0.
\end{equation}

In the beginning of the next section, we will take $\beta = 1$ for brevity, and we will focus on the case with $\eta\!=\!2$, which captures free space path loss propagation. Our analysis stays the same for other values of $\beta\!>\!0$, while the assumption for $\eta=2$ will allow us to express multi-dimensional integrals in a closed-form and obtain important performance insights for multi-hop connectivity. It is also noted that to have a reliable two-hop path to the RSU, the values of the connection function for both hops (V2V and V2I) must be higher than or equal to a threshold. From this perspective, a {\it{decode-and-forward}} relay is essentially assumed here.

 \subsubsection*{Slivnyak-Mecke formula}
 We close this section with the statement of the Slivnyak-Mecke formula,  which will be used throughout the paper. It is an extension of Campbell's formula to sum over tuples of a point process. Often, and throughout this paper, we sum indicator functions over tuples of nodes of a point process,  and evaluate the typical value of the sum over all graphs.  We refer the reader to~\cite[Lemma 2.3]{penrose2016} for the following lemma.
\begin{lemma}[Slivnyak-Mecke Formula]
\label{lemma:mecke}
  Let $n\geq 1$.
  For any measurable real-valued function $g$ defined on the product of
  $V^n \times \mathcal{G}$, where $\mathcal{G}$ is the space of
  all graphs in $V$, the following relation holds:
 \begin{multline}\label{e:meckeformula}
   \mathbb{E} \!
   \left[
     \sum_{X_1 ,\dots,X_n \in \mathcal{X}}^{\neq} g(X_1 ,\ldots,X_n ,
     G_H (\mathcal{X}\setminus \{X_1,\ldots,X_n\}))
     \right]
   \\ =
   \rho^n 
   \mathbb{E}  \! \left[
     \int_{V}
   \cdots\int_{V}
   g\left(x_1,\dots,x_n , G_H (\mathcal{X} ) \right) \mathrm{d}x_1
   \cdots \mathrm{d}x_n \right]\!, 
 \end{multline}
 where $\mathcal{X}$ is a PPP with intensity $\rho>0$ on $V$, and $\sum^{\not=}$ means the sum over all \textit{ordered} $t$-tuples of distinct points in $\mathcal{X}$. 
 \end{lemma}

\section{Results} \label{sec:analysis} 
This section presents the main technical results of our work. We start by introducing the preliminaries and then address the problem of determining the statistics of the number of vehicles on the road with a two-hop path connection to the RSU. 
\subsection{Preliminaries} 
\noindent 
In the sequel, we let
 \begin{equation}
 N_1(x,u) = \sum_{z \in \mathcal{X}}\mathbf{1}\left(x \leftrightarrow z \leftrightarrow u\right)
\end{equation}
 denote the number of distinct two-hop paths between vertices at $x$ and $u$ in $V$, and $\mathbf{1}(\cdot)$ being the indicator function. The following lemma is used throughout this article.
\begin{lemma}
\label{lemma:l1}
 The number $N_1(x,u)$ of two-hop paths between two given vertices $x, u \in V$ has a Poisson distribution.
\end{lemma}
\begin{proof}
    A proof of this result can be found in 
    \cite[Section 2]{privault2019}
    by computing the moments of $N_1(x,u)$ using the Slivnyak-Mecke formula and non-flat partitions.    
 \end{proof}
  
Next, we compute the mean number of two-hop paths which join two vertices of the graph, starting with the case of a finite interval of width $|V|<\infty$, which will later become the real line $\mathbb{R}$.
\begin{proposition}
  \label{p1}
         The probability of existence of at least one two-hop path between $x,u \in V$ is given by 
  \begin{equation}
  \label{e:ccl} 
\begin{array}{ccl}
  \mathbb{P}  ( x \twoheadleftrightarrow u ) \!\!\!&=&\!\!\! \displaystyle  1 -\exp\bigg(-
  \frac{\rho}{2} \sqrt{\frac{\pi}{2}} e^{-(x-u)^2/2} \,\,\, \times \\ \!\!\!\!\!\!\!\!\!\!\!\!& &\!\!\!\!\!\!\!\!\!\!\!\!    \left(\mathrm{erf}\left( \frac{|V|-(x-u)}{\sqrt{2}}\right) + \mathrm{erf}\left( \frac{|V|+x-u}{\sqrt{2}}\right)\right)
\bigg),
\end{array}
\end{equation}
where ${\mathrm{erf}}\left(x\right)\!=\! ( 2 /\sqrt{\pi} ) \int_0^x e^{-t^2}{\rm d}t$ is the error function.
\end{proposition}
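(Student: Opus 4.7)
The plan is to combine Lemma~\ref{l1} with a direct Slivnyak--Mecke computation of $\mathbb{E}[N_1(x,u)]$. Since Lemma~\ref{l1} identifies $N_1(x,u)$ as Poisson with parameter $\mathbb{E}[N_1(x,u)]$, the probability of at least one relay satisfies
\[
\mathbb{P}(x\twoheadleftrightarrow u) \;=\; \mathbb{P}(N_1(x,u)\geq 1) \;=\; 1 - e^{-\mathbb{E}[N_1(x,u)]},
\]
so it suffices to compute this mean.

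Applying Lemma~\ref{l:mecke} with $n=1$ to the defining sum $N_1(x,u)=\sum_{z\in\mathcal{X}}\mathbf{1}(x\leftrightarrow z\leftrightarrow u)$, and exploiting independence of the two Rayleigh fading events conditional on the relay location, the joint connection probability factorises as $H(x-z)H(z-u)=e^{-(x-z)^2-(z-u)^2}$. Hence
\[
\mathbb{E}[N_1(x,u)] \;=\; \rho\int_V e^{-(x-z)^2-(z-u)^2}\,\mathrm{d}z.
\]

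The remaining work is to evaluate this Gaussian convolution on an interval. Completing the square in $z$ yields
\[
(x-z)^2+(z-u)^2 \;=\; 2\!\left(z-\tfrac{x+u}{2}\right)^{\!2} + \tfrac{(x-u)^2}{2},
\]
so the factor $e^{-(x-u)^2/2}$ leaves the integral and a single centred Gaussian remains. Substituting $s=\sqrt{2}\bigl(z-(x+u)/2\bigr)$ and invoking the definition of the error function rewrites the definite integral over $V$ as a difference of two $\mathrm{erf}$ values at the translated endpoints of $V$; rearranging and using the symmetry of $V$ about the origin collects these into the sum $\mathrm{erf}((|V|-(x-u))/\sqrt{2}) + \mathrm{erf}((|V|+x-u)/\sqrt{2})$ appearing in~\eqref{e:ccl}. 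Substituting back into the Poisson tail identity completes the argument.

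No step here presents a substantive obstacle: Lemma~\ref{l1} supplies the entire probabilistic content, and what remains is the standard Gaussian-on-an-interval calculation. The only care required is bookkeeping in the $\mathrm{erf}$ substitution to track the shifted endpoints of $V$ correctly, and to verify that upon the translation $z\mapsto z-(x+u)/2$ the endpoint contributions package into the symmetric $|V|\pm(x-u)$ arguments seen in~\eqref{e:ccl}.
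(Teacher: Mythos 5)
Your proposal is correct and takes essentially the same route as the paper's own proof: Lemma~\ref{l1} plus the Poisson void probability reduce everything to $\mathbb{E}[N_1(x,u)]$, which you compute by Slivnyak--Mecke/Campbell and evaluate by completing the square in the Gaussian integrand over $V$. The only caveat, which the paper's own derivation shares, is the final bookkeeping: for $V=[-|V|/2,|V|/2]$ centred at the origin the substitution gives arguments $(|V|-(x+u))/\sqrt{2}$ and $(|V|+x+u)/\sqrt{2}$, which agree with the stated $(|V|-(x-u))/\sqrt{2}$ and $(|V|+x-u)/\sqrt{2}$ only when $u=0$ or when $V$ is centred at $u$, so your appeal to ``symmetry of $V$ about the origin'' silently makes the same identification the paper does rather than constituting a gap relative to it.
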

 \begin{proof} 
       The mean of $N_1(x,u)$ is given by the Campbell's theorem for point processes as follows 
   \begin{align}
   \mathbb{E}\left[ N_1(x,u)\right]  \nonumber 
  &= \rho\int_{V}
  \mathbb{E} \left[\mathbf{1}\left(x \leftrightarrow z \leftrightarrow u\right) \right]\mathrm{d}z
  \\
  \nonumber
  &= \rho\int_{V}
  H(x,z)H(z,u)\mathrm{d}z \\
  \nonumber
  &= \rho\int_{-|V|/2}^{|V|/2}
  e^{- \left(x-z\right)^2 -(z-u)^2 }\mathrm{d}z
  \\
  \nonumber
  &=
   \frac{\rho }{2} \sqrt{\frac{\pi}{2}} e^{-(x-u)^2/2}
     \bigg (\mathrm{erf}\left( \frac{|V|-(x-u)}{\sqrt{2}}\right) \, + \\ 
& \,\,\,\,\, \mathrm{erf}\left( \frac{|V|+x-u}{\sqrt{2}}\right)\bigg). 
\end{align}

We conclude from Lemma~\ref{lemma:l1}, by noting that the probability that at least one two-hop path exists between $x$ and $u$ in $V$ is given by the complement of the void probability of a Poisson variate, as  
\begin{align}
 \mathbb{P} ( x \twoheadleftrightarrow u ) 
 &= 1- \mathbb{P} ( N_1(x,u) = 0 ) \nonumber \\
  &= 1-\exp\left(- \mathbb{E} [ N_1(x,u) ]\right). \nonumber 
\end{align}
 \end{proof}
   Proposition~\ref{c1} below, which treats the infinite length case $V = \mathbb{R}$, can be obtained either from Proposition~\ref{p1} by letting $|V|\to\infty$ in Eq.~\eqref{e:ccl}, or from Lemma~\ref{lemma:l1}.
\begin{proposition}
  \label{c1}
       The probability of existence of at least one two-hop path between $x$ and $u$ in $\mathbb{R}$ is given by 
       $$
 \mathbb{P} ( x \twoheadleftrightarrow u ) 
 = 1-\exp\left(-\rho \sqrt{\frac{\pi}{2}}e^{-(x-u)^2/2 }\right). 
$$ 
\end{proposition} 
\begin{proof}
 Here, we have 
 \begin{align}
   \nonumber 
  \mathbb{E} [ N_1(x,u) ] 
  &= \rho\int_{-\infty}^\infty
  \mathbb{E} \left[\mathbf{1}\left(x \leftrightarrow z \leftrightarrow u\right) \right]\mathrm{d}z
  \\
  \nonumber
  &= \rho\int_{-\infty}^\infty
  e^{- \left(x-z\right)^2 -(z-u)^2 }\mathrm{d}z
  \\
  \nonumber
  &= \rho\sqrt{\frac{\pi}{2}} e^{-(x-u)^2/2}, 
\end{align}
 and we can conclude as in the proof of Proposition~\ref{p1}. 
\end{proof}

\subsection{Main result}
Next, we assume that the RSU is located at $u$ and apply Proposition~\ref{c1} to analyze the mean number of vertices with two-hop connectivity to the RSU. 
 \begin{theorem}
  \label{t:expectation}
  Let $N_2$ be the number of vertices with two-hop connectivity to the RSU in a specific realisation of $G_H (\mathcal{X})$. 
  \begin{enumerate}[i)] 
  \item The mean number of vertices with at least one two-hop path to the RSU is given by 
    \begin{align}      
      \mathbb{E} [N_2 ]
  &=   \rho\sqrt{2\pi}\sum_{k=1}^{\infty}(-1)^{k-1}\frac{\big(\rho \sqrt{\pi/2} \big)^k}{k!\sqrt{k}}. \label{e:a0}
\end{align}
\item In addition, we have the exact equivalents 
  \begin{align}
    \label{e:a1}
  \mathbb{E} [N_2 ] & \simeq  2 \rho \sqrt{ 2\ln\rho + 2\ln\sqrt{\pi/ 2}  }, \, {\text{as}} \,  \rho \to \infty.  \\ 
    \label{e:a2}
  \mathbb{E} [N_2 ] & \simeq \pi \rho^2, \, {\text{as}} \, \rho \to 0.  
\end{align}
  \end{enumerate}
\end{theorem}
\begin{proof} 
  \noindent
  $(i)$ We represent the number $N_2$ as the sum  
\[ 
  N_2 = \sum_{x \in \mathcal{X}}\mathbf{1}\left(x \twoheadleftrightarrow u\right).
\]

Next, we find via Lemma~\ref{lemma:mecke} and Proposition~\ref{c1}, that the expected value of this number is 
  \begin{eqnarray}
  \nonumber 
  \mathbb{E} [N_2] & = & 
  \mathbb{E} \left[\sum_{x \in \mathcal{X}}\mathbf{1}\left(x \twoheadleftrightarrow u\right)\right]
  \\
  \nonumber
  & = & \rho\int_{-\infty}^\infty
   \mathbb{E} \left[\mathbf{1}\left(x \twoheadleftrightarrow u\right)\right] \mathrm{d}x
  \\
  \nonumber 
  \label{eq:1c}
  & = & \rho\int_{-\infty}^\infty
   \mathbb{P} ( x \twoheadleftrightarrow u ) \mathrm{d}x
  \\
\nonumber 
\label{eq:1d}
  & = & 
  \rho  \int_{-\infty}^\infty
  \big( 1-\exp\big(-\rho \sqrt{\pi / 2}e^{-x^2/2 }\big)\big)\mathrm{d}x.
\end{eqnarray} 
Let us define $\alpha=\rho \sqrt{\pi / 2}$ and 
 \[
   f (\alpha ) = \int_{-\infty}^\infty
   \big( 1 - \exp \big( - \alpha e^{-x^2/2} \big) \big)\mathrm{d}x.
 \]

The integral in the expression of $f\!\left(\alpha\right)$ above can be expanded in a Taylor series about $\rho=0$ and integrated term by term to obtain the infinite oscillating sum 
\[
\begin{array}{ccl}
  f(\alpha ) &=& \displaystyle - \int_{-\infty}^\infty
   \sum\limits_{k=1}^{\infty}\frac{(-\alpha )^k}{k!}
  e^{-kx^2 /2}\mathrm{d}x \\ 
  &=& \displaystyle - \sqrt{2\pi}\sum\limits_{k=1}^{\infty} \frac{(-\alpha )^k}{k!\sqrt{k}},
\end{array}
\]
which proves Eq.~\eqref{e:a0}.

\noindent
$(ii)$ Next, we derive the asymptotics in \eqref{e:a1}. By doing a change of variable first $v=\frac{x}{\sqrt{2\log \alpha}}$ and then by splitting the integral, we have 
\begin{align}
  \nonumber
  f\left(\alpha\right) & = 
  \sqrt{2 \log \alpha }\int_{-\infty}^\infty 
 \big( 1 - \exp\big({-\alpha^{1-v^2}} \big)\big) \mathrm{d}v 
 \\
 & =  2 \sqrt{2 \log \alpha }\int_0^1 
 \big( 1 - \exp\big({-\alpha^{1-v^2}} \big)\big) \mathrm{d}v  \,\,\, + \nonumber \\ 
& \,   2 \sqrt{2 \log \alpha }\int_1^\infty
   \big(1 - \exp\big({-\alpha^{1-v^2}} \big)\big) \mathrm{d}v, 
 \, \alpha \geq 1. \label{e:f(a)0}
\end{align}

Using the bound  $0 \leq 1- e^{-\alpha^{1-v^2}} \leq e \alpha^{1-v^2}$, which is valid as $0 \leq \alpha^{1-v^2} \leq 1$ when $v \geq 1$
 and $\alpha \geq 1$, the second integral in Eq.~\eqref{e:f(a)0} can be bounded as 
\[
\begin{array}{ccl}
 0 \!\!\!&\leq&\!\!\! \displaystyle \int_1^\infty
  \big( 1 - e^{-\alpha^{1-v^2}} \big) \mathrm{d}v
  \leq 
  \alpha
  e \int_1^\infty \alpha^{-v^2} 
 \mathrm{d}v \nonumber \\ 
  \!\!\!&\leq &\!\!\! \displaystyle \alpha
  e \int_1^\infty
 v e^{-v^2 \log \alpha } \mathrm{d}v \nonumber = 
 \frac{e}{2 \log \alpha },
 \quad \alpha \geq 1.
\end{array} 
\] 

Hence, as $\alpha\to\infty$ the second integral in Eq.~\eqref{e:f(a)0} becomes equal to zero. Furthermore, by the limit $\lim_{\alpha \to \infty} \alpha^{1-v^2}= +\infty$, $v\in [0,1)$, and dominated convergence,the first integral in Eq.~\eqref{e:f(a)0}  satisfies 
\[
\lim_{\alpha \to \infty}
 \int_0^1
 \big(1 - e^{-\alpha^{1-v^2}} \big) 
 \mathrm{d}v = 1.
 \]
 
We can therefore conclude that $ f(\alpha ) \simeq 2 \sqrt{2\log \alpha } \, {\text{as}} \, \alpha \to \infty$. The equivalent in Eq.~\eqref{e:a2} as $\alpha$ tends to zero follows by truncation of Eq.~\eqref{e:a0} at the first term. 
\end{proof}

For completeness, we also state the generalization of the Theorem~\ref{t:expectation} for arbitrary $\beta>0$ without proof. 
\begin{proposition}
For the connection function $H\!\left(r\right)=e^{-\beta r^2}$ with $\beta>0$, the mean number of vertices with at least one two-hop connection to the RSU is 
\[
\mathbb{E}\!\left[N_2\right] = \rho \sqrt{\frac{2\pi}{\beta}}\sum\limits_{k=1}^\infty (-1)^{k-1}\frac{\left(\rho\sqrt{\pi/\left(2\beta\right)}\right)^k}{k!\sqrt{k}}.
\]
In addition, we have the following exact asymptotics
\begin{align}
\mathbb{E}\!\left[N_2\right] & \simeq 2\rho\sqrt{\left(2/\beta\right)\ln\rho + \left(2/\beta\right)\ln\sqrt{\pi/\left(2\beta\right)}}, \, \rho \to \infty. \nonumber \\ 
\mathbb{E}\!\left[N_2\right] & \simeq  \frac{\pi \rho^2}{\beta}, \, \rho \to 0. \nonumber 
\end{align}
\end{proposition}

Theorem~\ref{t:expectation} is validated against simulations in Fig.~\ref{fig:realization2}. We see that the asymptotic expression in Eq.~\eqref{e:a1} remains very accurate also for realistic values of $\rho$, while the approximation for $\rho\rightarrow 0$ is accurate only for very sparse networks. In Fig.~\ref{fig:realization}, we illustrate that the probability of a two-hop path, see Eq.~\eqref{e:ccl}, for a high density $\rho=500$ sharply decreases beyond some distance from the RSU. Finally, in Fig.~\ref{fig:grid0} we observe that even for relatively low densities, a Gaussian distribution well approximates the probability mass function (PMF) of $N_2$, i.e., after rescaling 
\begin{equation}\label{e:p}
  \mathbb{P} \left(\frac{N_2 - \mathbb{E} [N_2]}{\sqrt{\mathrm{Var}[N_2]}} \leq x \right) \simeq \Phi(x), 
  \qquad x\in \mathbb{R}, 
\end{equation}
as $\rho\to\infty$, with $\Phi$ the cumulative distribution function of the standard normal distribution. Observing normal approximation is common in the case of independent trials. Here, the events ``a node $x$ has a two-hop path to the RSU'', for each $x \in \mathcal{X}$, are dependent, nevertheless, the dependency is sufficiently short range to lead to a Gaussian distribution in the dense limit. A proof for the Gaussian convergence is not a triviality and is left for future work. 
\begin{figure}[!t]
   \centering
   \includegraphics[width=3in]{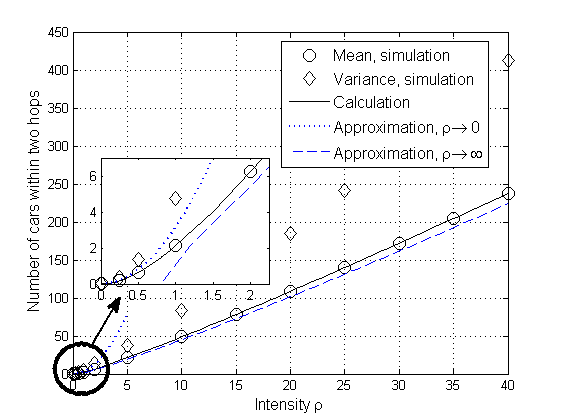}
   \caption{The mean number of vehicles with a two-hop path connection to the RSU calculated using Eq.~\eqref{e:a0} is compared against $20\,000$ Monte Carlo simulation runs over a line segment $\mathcal{X} \subset [-10,10]$ with the connection function $H(r)=\exp(-r^2)$. The simulated variance is depicted too.} 
 \label{fig:realization2}
 \end{figure}
\begin{figure}[!t]
   \centering   
   \includegraphics[width=3in]{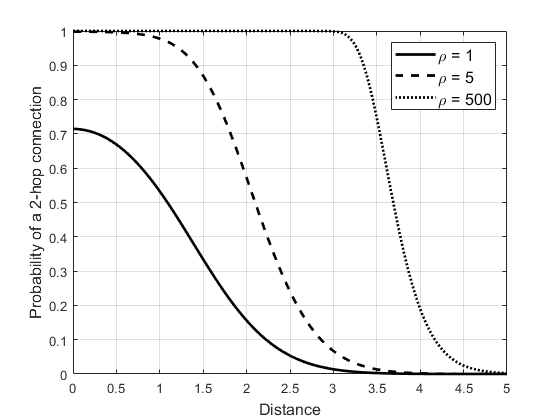}
   \caption{The probability of existence of a two-hop path as a function of the distance to the RSU.}
 \label{fig:realization}
\end{figure}
\begin{figure}[!t] 
  \centering
   \centering   
       \includegraphics[width=0.45 \textwidth]{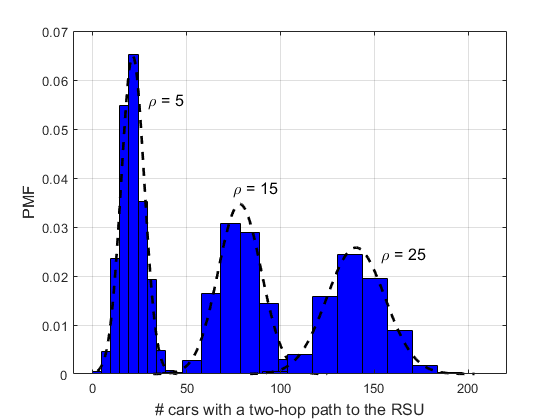}
       \caption{Counts of nodes with two-hop connectivity to the RSU for various densities $\rho$ of cars. See the caption of Fig.~\ref{fig:realization2} for the rest of the parameter settings.}
       \label{fig:grid0}
\end{figure}

\section{Validation with synthetic traces}\label{sec:traces}
Next, we assess the sensitivity of our model against the Poissonian assumption for the distribution of vehicles along a motorway. Using synthetic mobility traces~\cite{Gramaglia2014,Gramaglia2016}, we simulate the distribution of the number of vehicles with at least one two-hop link to the RSU, and compare this distribution (also referred to as the empirical distribution) to a Gaussian with mean equal to the parameter calculated in Eq.~\eqref{e:a0}. Overall, the spatial distribution of vehicles, at a snapshot of time, along a motorway, is not exactly Poisson. Therefore it is important we quantify the sensitivity of the mean calculated in Eq.~\eqref{e:a0} to small perturbations to the deployment model.

Thanks to~\cite{Gramaglia2014,Gramaglia2016}, $1800$ consecutive snapshots of road traffic along a $10$-km three-lane motorway are publicly available. The associated data files contain the horizontal location and occupied lane for all the vehicles over the snapshots. The time granularity is set to one second, so the total simulation time is half an hour, see~\cite[Section III-A]{Gramaglia2014}. In our simulations, we will actually use the last $1200$ out of the total $1800$ snapshots, in order to allow the first vehicles entering the roadway to reach the exit. This is to ensure that for all considered snapshots there are enough vehicles along the highway, hence, we can accurately simulate the statistics of headway distances. Since the considered snapshots cover a time interval of $20$ mins, there are slight variations in the intensity of vehicles during the simulation time, see the solid line in the inset of Fig.~\ref{fig:N2BusyAllsnap}.

We first generate the empirical distribution of headway distances for each snapshot. To do that we project all vehicles onto a single line, which apparently, does not introduce much error, because the communication range is expected to be much larger than the inter-lane distance separation. In each simulation run, we use inversion sampling of the empirical CDF to cover a line segment of $10$ km with vehicles. The set of vehicles is denoted by $\mathcal{Y}$, and the origin, where the RSU is located, is taken in the middle of the road segment, i.e., at $5$ km. Next, for each vehicle $y\!\in\!\mathcal{Y}$, we generate a random number distributed uniformly in $\left[0,1\right]$. After comparing it with the value of the connection function $e^{-\beta r^2}$, where $r$ is the distance between the vehicle and the RSU, we can identify the set of vehicles $\mathcal{Y}_1$, which have a single-hop communication link to the RSU. Then, we search over the vehicles $y\!\in\!\mathcal{Y}\backslash \mathcal{Y}_1$ and separate those with a single-hop connection to at least one of the vehicles in $\mathcal{Y}_1$. They become the elements of the set $\mathcal{Y}_2$, which consists of the vehicles with a two-hop path to the RSU. To complete the set $\mathcal{Y}_2$, we need to add to it the elements of the set $\mathcal{Y}_1$ which also have a two-hop connection to the RSU. Finally, the cardinality of the set $\mathcal{Y}_2$ is computed for each simulation run.

In Fig.~\ref{fig:N2BusyAllsnap}, it is illustrated that Eq.~\eqref{e:a0} yields a very good estimate of the simulated mean number of vehicles with a two-hop path to the RSU. Also, the quality of the Gaussian fit is good, even though the actual deployment of vehicles does not follow a PPP and the density of vehicles varies with time.   
\begin{figure}[!t] 
 \centering
\includegraphics[width=3in]{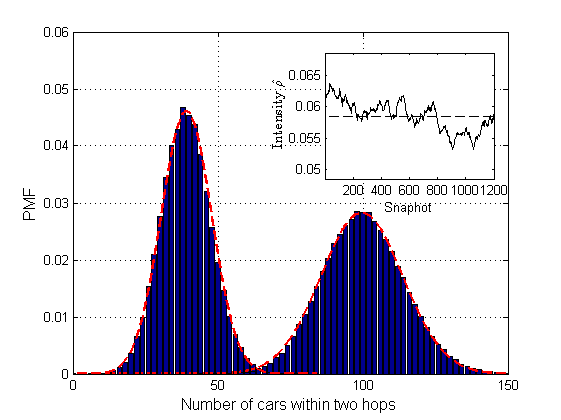}
\caption{The empirical PMF (blue bars), of the number of vehicles with a two-hop connection to the RSU averaged over $1200$ snapshots. For each snapshot we generated $100$ independent spatial configurations of vehicles by sampling its empirical CDF of inter-vehicle distances. Two values for the parameter $\beta$ are considered: $\beta\!=\! 5\times 10^{-5}$ and $\beta\!=\! 10^{-5}$. The simulated mean values for the number of vehicles with a two-hop path to the RSU is $39.51$ and $100.60$, respectively. The estimated intensity of vehicles $\hat{\rho}$ for each snapshot is depicted in the inset (solid line). The mean intensity of vehicles averaged over all considered snapshots is equal to $\hat{\rho}=0.0585\, {\text{m}}^{-1}$, see the dashed line in the inset. After substituting $\hat{\rho}=0.0585$ into Eq.~\eqref{e:a0}, we obtain the following values for the expected number of vehicles with a two-hop connection to the RSU: $39.12$ vehicles for $\beta\!=\! 5\times 10^{-5}$ and $99.60$ vehicles for $\beta\!=\! 10^{-5}$. The red dashed line corresponds to the Gaussian approximation for the distribution of $N_2$ using a mean equal to $\mathbb{E} [N_2]$ calculated from Eq.~\eqref{e:a0} and variance obtained by the simulations.}
\label{fig:N2BusyAllsnap}
\end{figure}
  
\section{Conclusions} \label{sec:conclusion}
Motivated by V2I communications in a VANET topology, we investigate the notion of $k$-hop connectivity to a fixed roadside unit (RSU), specifically, in the random connection model on a one-dimensional Poisson point process (PPP) with connection function $H(r) = \exp(-\beta r^2)$, and with $k=2$. We calculate the typical number of cars with a two-hop path connection to the RSU, denoted by $\mathbb{E} [N_2]$. This is then expanded in a power series to provide an infinite oscillating series expression for $\mathbb{E}[N_2]$, which, truncated to the first term, shows that the two-hop connectivity grows quadratically with traffic density, when the traffic is low. On the other hand, we have shown that $\mathbb{E}[N_2] = O(\rho \sqrt{\log \rho})$ as $\rho\to\infty$. Finally, we present the comparison with both Monte Carlo simulations, corroborating these formulas, and with a random model based on synthetic mobility traces. The simulations confirm that the typical number of vehicles in the  two-hop range of an RSU is an accurate representation of an interference-free network of moving cars, at a random instance of time, and further, that a Gaussian central limit theorem is also present. We have also identified the following two non-trivial problems as promising directions for future work: (i) Calculate the distribution of the number of $k$-hop paths between a node at $x$ and the RSU, which will enable us to compute the expected number of vehicles $\mathbb{E}[N_k]$ with a $k$-hop path to the RSU. (ii) Calculate higher moments of the number of cars with a two-hop path to the RSU, and use them to prove the Gaussian convergence. 


\footnotesize

\def\cprime{$'$} \def\polhk#1{\setbox0=\hbox{#1}{\ooalign{\hidewidth
  \lower1.5ex\hbox{`}\hidewidth\crcr\unhbox0}}}
  \def\polhk#1{\setbox0=\hbox{#1}{\ooalign{\hidewidth
  \lower1.5ex\hbox{`}\hidewidth\crcr\unhbox0}}} \def\cprime{$'$}

\end{document}